\newtheorem{theorem}{Theorem}
\theoremstyle{definition}
\newtheorem{defn}{Definition}[section]
\newtheorem{lemma}{Lemma}[section]
\newcommand{\M}{\mathcal{M}}
\def\figfolder {.}
\title{Differentially private cross-silo federated learning}
\author{\textbf{Mikko Heikkil\"a}$^1$, \textbf{Antti Koskela}$^2$, \textbf{Kana Shimizu}$^3$, \textbf{Samuel Kaski}$^{4,5}$, and \textbf{Antti Honkela}$^{2}$\\
  {$^1$ Helsinki Institute for Information Technology HIIT, Department of Mathematics and Statistics, University of Helsinki, Finland} \\
  {$^2$ Helsinki Institute for Information Technology HIIT, Department of Computer Science, University of Helsinki, Finland} \\
  {$^3$ Department of Computer Science and Engineering, Waseda University, Japan} \\
  {$^4$ Helsinki Institute for Information Technology HIIT, Department of Computer Science, Aalto University, Finland} \\
  {$^5$ Department of Computer Science, University of Manchester, UK}
  }
\begin{document}

\maketitle

\begin{abstract}
Strict privacy is of paramount importance in distributed machine learning. 
Federated learning, with the main idea of communicating only what is needed for learning, 
has been recently introduced as a general approach 
for distributed learning to enhance learning and improve security. 
However, federated learning by itself does not guarantee 
any privacy for data subjects. To quantify and control 
how much privacy is compromised in the worst-case, we can 
use differential privacy.

In this paper we combine additively homomorphic secure summation protocols 
with differential privacy in the so-called cross-silo federated learning setting. The goal is to  
learn complex models like neural networks while guaranteeing strict privacy 
for the individual data subjects. We demonstrate that our proposed solutions give 
prediction accuracy that is comparable to the 
non-distributed setting, and are fast enough to enable learning models with millions of parameters in a reasonable 
time.

To enable learning under strict privacy guarantees that need privacy amplification by subsampling, 
we present a general algorithm for oblivious distributed subsampling. 
However, we also argue that when malicious parties are present, 
a simple approach using distributed Poisson subsampling gives better privacy.

Finally, we show that by leveraging random projections we can further scale-up our approach 
to larger models while suffering only a modest performance loss.
\end{abstract}

\section{Introduction}
\label{sec:introduction}

Privacy is increasingly important for modern machine learning.
Commonly, the sensitive data needed for learning are distributed to different parties, such as 
individual hospitals and companies or IoT device owners. Federated learning (FL) \cite{McMahan_2017} 
has been recently introduced in order to reduce the risks to privacy and improve model training. The 
key idea in FL is that only the information necessary for learning a model will be communicated while 
the actual data stays distributed on the individual devices.

FL makes attacking the system harder since no centralised server holds all the data. 
Nevertheless, it has been recently demonstrated that FL by itself is not 
enough to guarantee any level of privacy \cite{zhu2019deep}. The current gold standard in privacy-preserving 
machine learning is differential privacy (DP) \cite{dwork_et_al_2006}, 
which is based on clear mathematical definitions and aims to guarantee a level of indistinguishability for 
any individual data subject: the results of learning would be nearly the same if any single data entry 
was removed or arbitrarily replaced with another.

In this paper, we concentrate on FL with DP to enable efficient learning on distributed data 
while guaranteeing strong privacy. We focus on the so-called cross-silo FL \cite{Kairouz_2019}, 
where the number of parties might range from a few to tens of thousands, and 
each party generally holds from tens to some thousands of samples of data. 
Additionally, we consider using trusted execution environments (TEEs) as 
an extra layer of security.

To guarantee strict privacy with good utility, it is often vital that privacy can be amplified 
by the stochasticity resulting from using data subsampling \cite{Kasiviswanathan_2011}. This requires that  
the origin of the samples in a minibatch remains hidden from other parties. We therefore introduce an  
algorithm that enables oblivious distributed subsampling with an arbitrary sampling scheme. 

The two most used sampling schemes with privacy amplification are 
sampling without replacement, where the minibatch size 
is fixed and each sample has equal probability of being included, and Poisson sampling, 
where the probability of independently including each sample is fixed 
and the batch size varies. 
We show that when malicious parties are present, using a simple distributed Poisson sampling scheme 
generally gives better privacy than using a more complex distributed 
sampling without replacement scheme.

Finally, we argue both theoretically and on empirical grounds 
that by using random projections, we can scale our methods up to even larger models while only 
incurring a modest loss in prediction accuracy.

The code for all the experiments is freely available \cite{mikko_heikkila_2020_3938639}.


\subsection{Contribution}

We consider DP cross-silo FL in two settings: 
with pairwise connections between the parties (feasible for at most tens to hundreds of parties), 
and with connections to central servers (hundreds to tens of thousands of parties).

We present and empirically test two methods based on combining fast secure multiparty-computation (SMC) 
protocols and DP, with emphasis on iterative learning and distributed subsampling. The 
methods can be used for learning complex models such as neural networks from large datasets under 
strict privacy guarantees, with reasonable computation and communication capacities. 
The proposed method without pairwise connections 
is based on a novel secure summation algorithm, 
preliminarily presented in an earlier conference publication \cite{Heikkila_2017}. 

For achieving privacy amplification by subsampling in the distributed setting, we propose an 
algorithm for oblivious subsampling under an arbitrary sampling scheme. 
We also show that 
a simpler distributed Poisson sampling scheme is preferable to a more complex scheme of 
distributed sampling without replacement, when malicious parties are present.

To speed up encryption and reduce the amount of communication needed, we also consider 
communicating only randomly projected low-dimensional representations. We argue theoretically and 
show with experiments that the effect on utility under DP is small.


\section{Background}

In the next sections we give a short review of the necessary theory, starting with DP in 
Section \ref{sec:DP}, and continuing with random projections in Section \ref{sec:dim_reduction}, 
and trusted execution environments in Section \ref{sec:TEEs}.

\subsection{Differential privacy}
\label{sec:DP}

Differential privacy \cite{dwork_et_al_2006, dwork_roth_2014} has emerged as the leading method 
for privacy-preserving machine learning. On an intuitive level, DP guarantees a level of deniability to each data subject: 
the results would be nearly the same if any single individual's data were replaced with an arbitrary sample. More 
formally we have the following:

\begin{defn} (Differential privacy)
\label{def:ed-dp}
A randomised mechanism $\M: \mathcal{D} \rightarrow \mathcal{R}$ is $(\epsilon,\delta)$-DP, if for all 
neighbouring datasets $D, D' \in \mathcal{D}: D \sim D'$ and 
for any measurable $E \subseteq \mathcal{R}$, 
$$ P(\M (D) \in E ) \leq e^{\epsilon} P(\M (D') \in E ) + \delta, $$
where $\epsilon > 0, \delta \in [0,1]$. If $\delta=0$, the mechanism is called (pure) $\epsilon$-DP.
\end{defn}

We consider two specific neighbouring relations: when $| D |~=~|D'|$ and they differ by a single element (so-called bounded DP), 
the relation is called a substitution relation and denoted by $\sim_S$. 
When one can be transformed into the other by removing or adding a single element (unbounded DP), 
the relation is called a remove/add relation and is denoted by $\sim_R$.

One very general method for guaranteeing DP is the Gaussian mechanism, which amounts to 
adding independent Gaussian noise to the learning algorithm. The amount of privacy then 
depends on the sensitivity of the function and the noise magnitude:

\begin{defn} (Sensitivity)
\label{def:sensitivity}
Let $f: \mathcal{D} \rightarrow \mathbb R^d$. The $\ell_2$-sensitivity of $f$ is defined as
\begin{equation}
\Delta = \sup_{D,D' \in \mathcal D:D \sim D' } \| f(D) - f(D')  \|_2 ,
\end{equation}
where $\sim$ denotes a general neighbourhood relation.
\end{defn}

\begin{defn} (Gaussian mechanism)
\label{def:Gaussian_mechanism}
Let $f: \mathcal{D} \rightarrow \mathbb R^d$ with sensitivity $\Delta$. A randomised mechanism 
$\mathcal G_f: \mathcal D \rightarrow \mathbb R^d$, 
\begin{equation}
\mathcal G_f (D) = f(D) + \mathcal N (0, \sigma^2 \cdot I_d)
\end{equation}
is called the Gaussian mechanism.
\end{defn}

Composition refers to DP mechanism(s) accessing a given dataset several times, which 
either weakens the privacy guarantees or requires increasing the noise magnitude guaranteeing privacy.
No analytical formula is known for the exact privacy guarantees provided by 
the Gaussian mechanism for a given number of compositions when using data subsampling. 
However, as shown in \cite{Sommer2019, Koskela2019}, they 
can be calculated numerically with arbitrary precision.

In effect, for some function $f$, we can numerically calculate the amount of privacy for a given 
number of compositions when using the subsampled Gaussian mechanism with 
a chosen sequence of noise values. This is called privacy accounting. In practice, we use the 
privacy loss accountant of \cite{Koskela2019} for all the actual privacy 
calculations\footnote{https://github.com/DPBayes/PLD-Accountant}.


\subsection{Dimensionality reduction}
\label{sec:dim_reduction}

Using SMC protocols is generally expensive 
in terms of communication and computation, and even with faster SMC protocols the 
overhead can be significant. We therefore also consider dimensionality reduction by a random projection 
to scale the methods to even larger problems.

Instead of using SMC protocols 
to send a vector of dimension $d$, 
each party only communicates a projection with dimension $k$ s.t. $k \ll d$. 
The representation is obtained using a random projection matrix generated with a shared seed.

For the projection, we use a mapping $f_{JL}(u) = P^T u, P \in \mathbb R^{d \times k}$ s.t. 
each element of $P$ is drawn independently from $\mathcal N(0,1/k)$. 
Given the low-dimensional representation $P^T u$ and $P$, since $\mathbb E_P [P P^Tu] = u$, 
the original vector $u$ can be approximated as $u \simeq P P^Tu$. 
Moreover, as shown in \cite{Dasgupta_2003}, with large enough $k$ the 
Johnson-Lindenstrauss lemma \cite{JL_1984} guarantees that $u^T P P^T u \simeq u^T u$. 
These properties explain our experimental results in 
Section \ref{sec:experiments} showing that when training under DP 
and with $k \ll d$, learning with projections results in nearly the same prediction accuracy 
as using full-dimensional gradients.

To guarantee privacy when the privacy mechanism operates on a low-dimensional 
projection and the projection matrix is public, we first define sensitivity bounded 
functions and then state the essential lemma, both introduced in \cite{Agarwal_2018}:
\begin{defn}[Function sensitivity]
\label{def:function_sensitivity}
A randomised function $f:\mathcal D \rightarrow \mathcal X $ is $(\Delta_f, \delta)$-sensitive, if 
for any $D,D' \in \mathcal D:D \sim D'$, there exist coupled random variables 
$X, X' \in \mathcal X$ s.t. the marginal distributions of $X, X'$ are identical to those of 
$f(D), f(D')$, and 
$$ \mathbb P_{X, X'} ( \| X - X' \|_2 \leq \Delta_f ) \geq 1-\delta .$$
\end{defn}

\begin{lemma}
\label{lemma:dp_projection}
Let $\mathcal M: \mathcal X \rightarrow \mathcal R$ be an $(\epsilon, \delta)$-DP mechanism for 
sensitivity $\Delta_f$ queries, and let $f:\mathcal D \rightarrow \mathcal X$ be a $(\Delta_f,\delta')$-sensitive function. 
Then the composed mechanism $\mathcal M ( f(D) )$ is $(\epsilon, \delta + \delta')$-DP.
\end{lemma}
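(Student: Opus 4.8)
The plan is to fix a pair of neighbouring datasets $D \sim D'$ and a measurable event $E \subseteq \mathcal R$, and to bound $\mathbb P(\mathcal M(f(D)) \in E)$ in terms of $\mathbb P(\mathcal M(f(D')) \in E)$ by exploiting the coupling guaranteed by Definition \ref{def:function_sensitivity}. First I would introduce the coupled random variables $X, X'$ whose marginals match $f(D)$ and $f(D')$, and emphasise that the internal randomness of the mechanism $\mathcal M$ is drawn independently of $(X,X')$. Because $X \stackrel{d}{=} f(D)$ and this independence holds, $\mathbb P(\mathcal M(f(D)) \in E) = \mathbb P(\mathcal M(X) \in E)$, and likewise $\mathbb P(\mathcal M(f(D')) \in E) = \mathbb P(\mathcal M(X') \in E)$; this is precisely what lets me argue on the coupling while still concluding something about the actual composed mechanism.

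Next I would split on the good event $G = \{\, \|X - X'\|_2 \le \Delta_f \,\}$, which by $(\Delta_f,\delta')$-sensitivity satisfies $\mathbb P(G) \ge 1 - \delta'$. Writing
\begin{equation}
\mathbb P(\mathcal M(X) \in E) = \mathbb P(\mathcal M(X) \in E,\, G) + \mathbb P(\mathcal M(X) \in E,\, G^c),
\end{equation}
I would bound the second term crudely by $\mathbb P(G^c) \le \delta'$, absorbing the coupling-failure contribution into the additive slack. The first term is where the DP property of $\mathcal M$ enters.

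For the first term I would condition on $(X,X') = (x,x')$ with $(x,x') \in G$, so that $\|x - x'\|_2 \le \Delta_f$. On this event the two inputs differ by at most the sensitivity $\Delta_f$, so the assumption that $\mathcal M$ is $(\epsilon,\delta)$-DP for $\Delta_f$-sensitivity queries gives the pointwise inequality $\mathbb P(\mathcal M(x) \in E) \le e^\epsilon \mathbb P(\mathcal M(x') \in E) + \delta$, where the mechanism randomness is the independent copy. Integrating this inequality against the conditional law of $(X,X')$ restricted to $G$ yields
\begin{equation}
\mathbb P(\mathcal M(X) \in E,\, G) \le e^\epsilon\, \mathbb P(\mathcal M(X') \in E,\, G) + \delta\, \mathbb P(G) \le e^\epsilon\, \mathbb P(\mathcal M(X') \in E) + \delta.
\end{equation}
Combining the two bounds and translating back through the marginal identities gives $\mathbb P(\mathcal M(f(D)) \in E) \le e^\epsilon \mathbb P(\mathcal M(f(D')) \in E) + \delta + \delta'$, which is exactly the claimed $(\epsilon, \delta + \delta')$-DP guarantee.

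I expect the main obstacle to be purely in making the coupling step airtight rather than in any nontrivial estimate. One must be careful that the $(\epsilon,\delta)$-DP hypothesis on $\mathcal M$ is a statement about \emph{fixed} input pairs within distance $\Delta_f$, not about random inputs, so that it may be applied pointwise inside the conditional expectation; and that the mechanism's randomness is genuinely independent of the coupling, so that conditioning on $(X,X')=(x,x')$ leaves the law of $\mathcal M(\cdot)$ intact. The measurability needed to integrate the pointwise DP inequality over $G$ (in effect an application of Fubini's theorem to the joint law of the coupling and the mechanism randomness) is routine but worth acknowledging. The crude bound $\mathbb P(\mathcal M(X)\in E,\, G^c) \le \mathbb P(G^c)$ is what converts the coupling-failure probability $\delta'$ directly into additive privacy slack, which explains why the two $\delta$ terms simply add.
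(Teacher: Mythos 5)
Your proof is correct. Note that the paper does not supply its own proof of this lemma --- it is stated as imported from \cite{Agarwal_2018} together with Definition \ref{def:function_sensitivity} --- so there is no in-paper argument to compare against; your coupling argument (replace $f(D),f(D')$ by the coupled pair $(X,X')$ using independence of the mechanism's internal randomness, split on the good event $G=\{\|X-X'\|_2\le\Delta_f\}$ with $\mathbb P(G^c)\le\delta'$, apply the pointwise $(\epsilon,\delta)$ inequality conditionally on $(X,X')=(x,x')\in G$, and integrate) is the standard one and is essentially the proof given in the cited source.
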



 \subsection{Trusted execution environments}
 \label{sec:TEEs}

When the parties running the distributed learning protocols 
have enough computational resources available, we can increase 
security by leveraging trusted execution environments (TEEs).

TEEs aim to provide a solution to the general problem of 
establishing a trusted computational environment 
by providing confidentiality, integrity, 
and attestation for any computations run inside the TEE: no adversary outside the environment 
should 1) be able to gain any information about the computations done inside it (confidentiality), 
nor 2) be able to influence the computations done by the TEE 
(integrity), and 3) the TEE should be able to give an irrefutable and unforgeable proof 
that the computation has been done inside it
(attestation) \cite{Subramanyan_2017}.

In effect, TEE provides a securely isolated area in which users can store sensitive data and run the code they want protected. TEE is realised with a support of hardware, and it aims to protect applications from software and hardware attacks. 
Currently, technologies such as Intel's Software Guard Extensions (SGX) and ARM's TrustZone are used for implementing TEE.

It should be emphasised that the security of TEEs depends not only on software but also on the hardware manufactures, 
and should not be generally accepted at face value; various attacks targeting TEEs have been 
reported, including arbitrary code executions ~\cite{ rosenberg2014qsee, shen2015exploiting } and 
side-channel attacks~\cite{xu2015controlled, van2018foreshadow, cho2018prime+ }.

In this paper we assume TEEs are used by well-resourced parties as detailed in 
Section \ref{sec:cross_silo_dp_learning}. However,  we do not rely exclusively on TEEs for 
doing secure  computations, but only propose them as an additional layer of security to make 
attacking the system harder.


\section{DP cross-silo federated learning}
\label{sec:cross_silo_dp_learning}

In the general setting we consider, there are $N$ data holders (clients), the 
data $x$ are horizontally partitioned, i.e., each client 
has the same features, and the $i$th client has $n_i$ data points. 
The privacy guarantees are given on an individual sample level.

Our main focus is learning complex models using gradient-based optimisation. 
Using the standard empirical risk minimisation framework \cite{Vapnik_1991_ERM}, 
given a loss function $L(h(x)), h \in \mathcal H$, where $\mathcal H$ is some function class and 
$x$ includes the target variable, 
we would like to find $h^*_{OPT} = \arg\min_{h \in \mathcal H} L(h(x))$. However, since we do not know the 
underlying true data distribution, we instead minimise the average loss over data 
$h_{OPT} = \arg\min_{h \in \mathcal H} 1/n \sum_{i=1}^N \sum_{j=1}^{n_i} L(h(x_{ij}))$, where $n=\sum_{i=1}^N n_i$ 
and $x_{ij}$ refers to the $j$th sample  from client $i$.

The standard method for solving this problem with privacy in the centralised setting is 
DP stochastic gradient descent (DP-SGD) 
\cite{Song_2013, Abadi2016}. In this case, on each iteration we need to calculate 
\begin{equation}
\label{eq:dp_sgd_noisy_sum}
1/b [ \sum_{i=1}^N \sum_{j \in [b_i]} \triangledown_{\theta} \tilde L(x_{ij} | \theta) + \eta ], 
\end{equation}
where $[b]=\cup_{i} [b_i]$ is the chosen batch of size $b$, 
$\triangledown_{\theta} \tilde L$ is the clipped gradient of the loss, 
$\theta$ is the model parameter vector, 
and $\eta$ is the noise that guarantees privacy (see Definition \ref{def:Gaussian_mechanism}). 
In the following, we write $z_{i j} \stackrel{\Delta}{=} \triangledown_{\theta} \tilde L(x_{ij} | \theta)$.

With small problems, \eqref{eq:dp_sgd_noisy_sum} can be readily computed 
with general secure multiparty computation protocols, as done e.g. in \cite{Jayaraman_2018}. 
In our case this is not feasible due to high computational and communication requirements.

From the privacy perspective, an ideal distributed learning algorithm would have DP noise magnitude 
equal to the corresponding non-distributed learning algorithm 
run by a single trusted party on the entire combined dataset. We refer 
to this baseline as the trusted aggregator version.

Another important property for distributed private learning, that is mostly orthogonal to the amount of noise added, 
is graceful degradation: when the privacy guarantees are affected by adversarial parties, 
we want to avoid catastrophic failures where a single party can (almost) completely 
break the privacy of others. Instead, the privacy guarantees should degrade gracefully with 
the number of adversaries.

It is readily apparent from \eqref{eq:dp_sgd_noisy_sum} that the problem can be broken into two largely 
independent components: distributed noise addition and secure summation. We will consider these 
problems separately in sections \ref{sec:distributed_noise} and \ref{sec:secure_sum}, respectively.


\subsection{Threat model}
\label{sec:threat_model}

We allow for three types types of parties: honest, (non-colluding) honest-but-curious (hbc) and malicious.

The only honest parties we include are TEEs. We also consider the 
possibility of an adversary gaining full control of some number of TEEs, 
which effectively turns these into malicious parties.

The privacy guarantees of the methods are valid against fully malicious parties, whereas 
the accuracy of the results is only guaranteed with malicious parties who avoid 
model poisoning attacks or disrupting the learning altogether.


\subsection{Distributed noise addition}
\label{sec:distributed_noise}

In order to guarantee privacy, we need to add noise 
to the sum as in \eqref{eq:dp_sgd_noisy_sum}. This can be done utilising infinite 
divisibility of normally distributed random variables: when client $i$ 
adds Gaussian noise with variance $\sigma^2_i$ the total noise after summation is Gaussian with variance 
$\sum_{i=1}^N \sigma^2_i$.

Given a target noise variance $\sigma^2$, using TEEs 
we can simply set $\sigma_i^2 = \sigma^2/N$ so 
the sum corresponds to the trusted aggregator noise level. 
However, in case a TEE is compromised, 
the attacker can remove $1/N$ fraction of the noise from the results. 
This weakens the privacy guarantees, but does not typically result in a sudden catastrophic 
failure.

For the case without TEEs (or to protect against compromised TEEs with obvious modifications), 
we can upscale the noise to introduce a trade-off 
between the noise level and protection against colluders: 
assuming $T$ colluders, we need 
$$\sigma^2_i \geq \frac{1}{N-T-1} \sigma^2$$ 
(see e.g. \cite{Heikkila_2017} for a proof).

The total noise level in this case is sub-optimal compared to the trusted aggregator setting, but 
for moderate $T$ and larger $N$ the noise level is still close to optimal. 
Even if the actual number of colluders exceeds the parameter $T$ used in the algorithm, 
the privacy guarantees still weaken gracefully with the number of malicious parties.


\subsection{Secure summation}
\label{sec:secure_sum}

We consider two fast secure summation protocols. If the number of clients is low (at most some dozens), 
we can use fast homomorphic encryption based on secret sharing using pairwise keys. 
The algorithm was originally introduced in \cite{Castelluccia_2005}, uses fixed-point representation of real numbers, and is 
based on fast modulo-addition. For convenience, we state the algorithm in the Appendix A, where we 
also detail our implementation used in Section \ref{sec:experiments}.

When the number of clients is larger, we instead use the Distributed Compute Algorithm (DCA)
given in Algorithm~\ref{alg:DCA}, which we originally presented in a preliminary 
conference paper \cite{Heikkila_2017}. DCA is based on  
secret sharing using intermediate servers, called compute nodes,  
that do the actual summations. As before, real numbers are represented in fixed-point. 
The code on the compute server side could be run inside TEEs, if available, to increase 
security.

\begin{algorithm}[tb]
\caption{Distributed Compute Algorithm: secure summation for a large number of clients}
\label{alg:DCA}
\begin{algorithmic}[1]
\REQUIRE
   Number of parties $N$ (public); \\
   Number of compute nodes $M$ (public); \\
   Upper bound for the total sum $R$ (public); \\
   $y_i$ integer held by client $i, i=1, \dots, N$.
   \ENSURE Securely calculated sum $\sum_{i=1}^N y_i $.
   \STATE Each client $i$ simulates 
   $M-1$ vectors $u_{il}$ of uniformly random integers at most $R,$ and sets 
   $u_{iM} = -\sum_{l=1}^{M-1} u_{il} \mod R$.
   \STATE Client $i$ computes the messages
   $m_{i 1} = y_i + u_{i 1} \mod R$, $m_{i l} = u_{i l}, l=2,\dots M$,
   and sends them securely to the corresponding compute node.
   \STATE After receiving messages from all of the clients, 
   compute node $l$ broadcasts the noisy aggregate sums 
   $q_{l} = \sum_{i=1}^{N} m_{i l}$. A final aggregator will
   then add these to obtain $\sum_{l=1}^M q_{l} \mod R = \sum_{i=1}^N y_i$.
\end{algorithmic}
\end{algorithm}

In detail, we assume there are $M$ separate compute nodes, 
and use additive secret sharing for the summation, where the necessary number of shares 
for reconstructing the secret is $M.$ The security of the method relies on 
at least one of the nodes being honest-but-curious, i.e., not colluding with the others. 
As a setup, client $i$ runs a standard key-exchange 
with each of the compute nodes to establish shared keys for some symmetric encryption.

The shares $u_{il},l=1,\dots, M-1$ should be generated from the discrete uniform distribution bounded by $R$.

In the proposed method, the main setup cost is the key exchange protocol each client needs 
to run with the compute nodes. 
The main computational cost is typically incurred in generating the shares, and the communication 
scales linearly with the number of compute nodes.

To avoid possible bottlenecks on the compute nodes when $N$ is very large, we 
can increase $M$ while having each client only send messages to some smaller subset 
of the nodes. In this case at least one node in each subset needs to be non-colluding 
to maintain the privacy guarantees.


\subsection{Learning with minibatches}

For iterative learning, sub-sampling is often essential for good performance. This is 
especially true for DP learning with limited dataset sizes due to privacy amplification results: 
the uncertainty induced by using only a batch of the full data augments privacy, when the 
identities of the samples in a batch are kept secret \cite{Kasiviswanathan_2011, Beimel_2013, Abadi2016, Balle2018, Koskela2019}. 
We therefore want a method for choosing a batch [b] = $\cup_{i=1}^N [b_i]$ in the distributed setting s.t. the full batch 
is drawn according to a given sampling scheme and the identity of the batch participants is 
kept hidden.

Next, we formulate a general distributed sampling method that avoids the possibility of a catastrophic failure 
and guarantees a graceful degradation of privacy even in the presence of malicious adversaries. 
The main idea is to generate a list of random tokens such that 
there are as many tokens as there are observations in the full joint dataset, and only 
the owner of a given token knows who it belongs to.

Given such a list, we can use a simple joint sampling strategy: the participants jointly generate a 
seed for a PRNG, which is used to determine the batch participants without 
needing any additional communication. 
In case some batch $[b_i]$ is empty, the corresponding party sends only 
the noise necessary for DP. Note that 
this message cannot be distinguished from any other message, thus keeping 
the actual batch participants secret as required for privacy amplification.

To generate the list of tokens, we use Algorithm~\ref{alg:list_of_tokens} based on mixnets \cite{Chaum_1981}.

\begin{algorithm}[tb]
\caption{Create a list of tokens}
\label{alg:list_of_tokens}
\begin{algorithmic}[1]
\REQUIRE Each party $i$ has random integers $r_{ij},j=1,\dots,n_i$ called tokens that are all unique, and only $i$ 
knows who $r_{ij}$ belongs to.
There is a shared (arbitrary) ordering over the parties, assumed w.l.o.g. to be $1,\dots, N$. All parties know the public keys of all other parties.
	\STATE Party $i$ encrypts all it's tokens with the public keys of all parties according to the ordering: 
	$r_{ij} \rightarrow Enc_{k_1}( Enc_{k_2} (\dots Enc_{k_N}( r_{ij} )  \dots)), j=1,\dots,n_i $
	\STATE All parties publish all encrypted values to form a list with some ordering.
	\FOR{ $i=1$ to $N$ } \label{alg3:for_loop}
		\STATE Party $i$ decrypts layer $i$ from all the messages on the current list, randomly permutes the elements, and publishes the resulting new list. \label{alg3:loop_iter}
	\ENDFOR
\STATE Returns a list of tokens $r_{ij},i=1,\dots,N, j=1,\dots,n_i$ s.t. only the owner of a token knows who it belongs to.
\end{algorithmic}
\end{algorithm}

\begin{theorem}
\label{thm:list_of_tokens}
Assume the public-key encryption used in Algorithm \ref{alg:list_of_tokens} is secure. 
Then 
i) the list cannot be manipulated in nontrivial ways,  
ii) to hbc parties all tokens belonging to other parties are indistinguishable from each other and 
to malicious parties all tokens belonging to hbc parties are indistinguishable from each other.
\end{theorem}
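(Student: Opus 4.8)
The plan is to treat Algorithm~\ref{alg:list_of_tokens} as a decryption mixnet in the style of Chaum and to reduce both claims to the semantic security of the public-key scheme. Throughout I would fix the notation that just before step $i$ of the for-loop the published list consists of the tokens wrapped in the residual onion $Enc_{k_i}(\cdots Enc_{k_N}(r))$, each in a position determined by the composition of the permutations applied so far, and that after step $i$ the layer $k_i$ has been peeled. The whole argument is then a sequence of reductions: I assume an adversary (hbc, or a malicious coalition $C$ together with their secret keys, their permutations, and the view of every published list) that violates (i) or (ii), and build from it a distinguisher or forger against $Enc$.

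For (i) I would argue that whenever a corrupted party acts on the list, the residual onion still contains the layer $k_j$ of every honest owner $j \notin C$, which protects the plaintext. A \emph{nontrivial} manipulation I take to mean a \emph{targeted} one: replacing a designated honest party's token by a chosen value, retargeting ownership, or mauling a token $r$ into a chosen $f(r)$. Any such change forces the attacker either to read the nested plaintext (contradicting IND-CPA) or to forge a ciphertext that decrypts under an honest key to a chosen related value (contradicting non-malleability); moreover, to even select the victim's token the attacker would have to solve the linking problem of part (ii). What remains are the \emph{trivial} operations: dropping or inserting messages, both of which change the public list length (which must equal $n=\sum_i n_i$) and are hence detectable, and in any case an inserted token can only carry a value the attacker already knows and cannot be attributed to a victim. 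I would flag two points here: that untargeted garbling of the final plaintext list is possible but is a correctness/availability issue explicitly excluded from the privacy guarantee by the threat model, and that to rule out malleability cleanly one should read ``secure'' as IND-CCA rather than merely IND-CPA.

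For (ii) the crux is an unlinkability lemma for a single honest shuffle-and-peel step: given the input and output lists of a step performed by a party whose secret key the adversary lacks, no efficient adversary can match input ciphertexts to output ciphertexts better than by guessing the permutation uniformly at random. I would prove this by reduction to IND-CPA, using crucially that $Enc$ is randomized, so the adversary cannot re-encrypt an output under the missing public key and compare it to an input. Granting this lemma, both halves of (ii) follow by composition. For an hbc observer every other party is honest, so (for $N \ge 2$) at least one honest shuffle always separates that observer's knowledge from the final list, whence all tokens it does not own are mutually indistinguishable. For a malicious coalition, the security assumption supplies at least one non-colluding hbc party whose single shuffle already mixes the entire list; since the coalition cannot invert that permutation, every token originating from an hbc party is unlinkable to its source and the hbc parties' tokens are mutually indistinguishable to the adversary.

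I expect the main obstacle to be making the composition in (ii) rigorous when the honest step(s) are interleaved with steps the adversary fully controls: the reduction must present a clean hybrid across the honest shuffle while faithfully simulating the corrupted parties' known keys, permutations, and every intermediate public list, and it must account for the fact that after the honest shuffle the adversary may still peel further layers and observe the plaintext output. A secondary but genuine subtlety, already noted, is pinning down the exact security notion in (i) so that malleability attacks are provably excluded. Both difficulties are standard for mixnet analyses, so once the single-step unlinkability lemma is in place I would discharge them by a by-now routine hybrid argument.
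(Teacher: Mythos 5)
Your proposal is correct in substance and, for part (ii), follows the same route as the paper's proof --- a single honest shuffle-and-peel step renders the list unlinkable to anyone lacking that party's key, and the claim for each observer follows by locating at least one honest step they cannot invert (the paper phrases this informally as the list ``looking like random numbers with a random permutation'' after an honest iteration, for hbc parties $i<j$ each covered by the other's shuffle, and for malicious parties by any hbc iteration). Your contribution there is to make this rigorous via a single-step unlinkability lemma reduced to IND-CPA and a hybrid composition, which the paper does not attempt. Where you genuinely diverge is part (i): the paper's central observation is that every party can re-derive the intermediate onion ciphertexts of its \emph{own} tokens (it produced all the layers itself), and can therefore check after every iteration that its tokens are still present and that the list length is unchanged; hence the only undetectable manipulation is a malicious party tampering with its own tokens, which gains nothing. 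You instead rely on a length check plus non-malleability/targeting arguments, which forces you to strengthen ``secure'' to IND-CCA (as you flag) and still leaves untargeted replacement of an honest token --- which preserves the list length --- only excluded by appeal to the threat model rather than detected. The paper's membership check closes exactly that hole more cheaply and under the weaker assumption, so it is worth incorporating; otherwise your argument buys more formality at the cost of a stronger cryptographic hypothesis.
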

\begin{proof}

See Appendix B.

\end{proof}

Generating the list of random tokens is an expensive step by itself in terms of communication and 
computations, but it suffices to create the list only once since the owners of the tokens are not revealed 
at any point when using the list. 
The privacy guarantees resulting from using the list depend on the actual 
sampling scheme used.


\subsubsection{Comparison of common sampling schemes}
\label{sec:sampling_comparison}

As noted, the two most used sampling schemes with privacy amplification are 
sampling without replacement (SWOR), where the minibatch size $b$ 
is fixed and each sample has equal probability $b/n$ of being selected, and Poisson sampling, 
where we fix the probability $\gamma \in (0,1)$ of independently including each observation 
while the batch size varies.

In the distributed setting, SWOR can be done using Algorithm \ref{alg:list_of_tokens}. 
However, Poisson sampling is also achievable in a more straightforward manner, 
since each client can simply 
choose to include each sample independently 
with the given probability. As shown in the rest of this Section, when 
malicious parties control some of the data and know if 
they are included in the batch or not, the simpler 
Poisson sampling leads to better privacy amplification than SWOR.

With SWOR, assume we sample a set of $b$ tokens uniformly 
at random from the set of all such sets from 
the list of tokens generated with Algorithm \ref{alg:list_of_tokens}, 
and let $\gamma = b/n$, so the mean batch 
size from Poisson sampling matches SWOR. 

Writing $[T]$ for the set of $T$ malicious parties, 
let the total number of samples controlled 
by the non-malicious parties be $n_{\neg T}=\sum_{i \not \in [T]} n_i$. 
Using SWOR, the actual number of samples coming from 
the non-malicious parties now follows a Hypergeometric 
distribution with total population size $n$, total number of successes in the 
population $n_{\neg T}$, 
and number of draws $b$.

A worst-case amplification factor is then given by $\frac{b}{n_{\neg T}}$
i.e., all the samples in a batch 
come from the non-malicious parties. However, in most settings this is a rather unlikely event 
since the distribution is concentrated around the mean $b \frac{n_{\neg T}}{n}$. 
We can therefore 
improve markedly on the worst-case if we admit some amount of slack in the privacy parameter $\delta$
as shown in Figure \ref{fig:SWOR_sampling_fracs}. 
However, the amplification factor is still   
worse than the baseline given by Poisson sampling when there are any 
malicious data holders present.

\begin{figure}[htb]%
\begin{center}%
	\centerline{\includegraphics[width=.8\columnwidth]{\figfolder/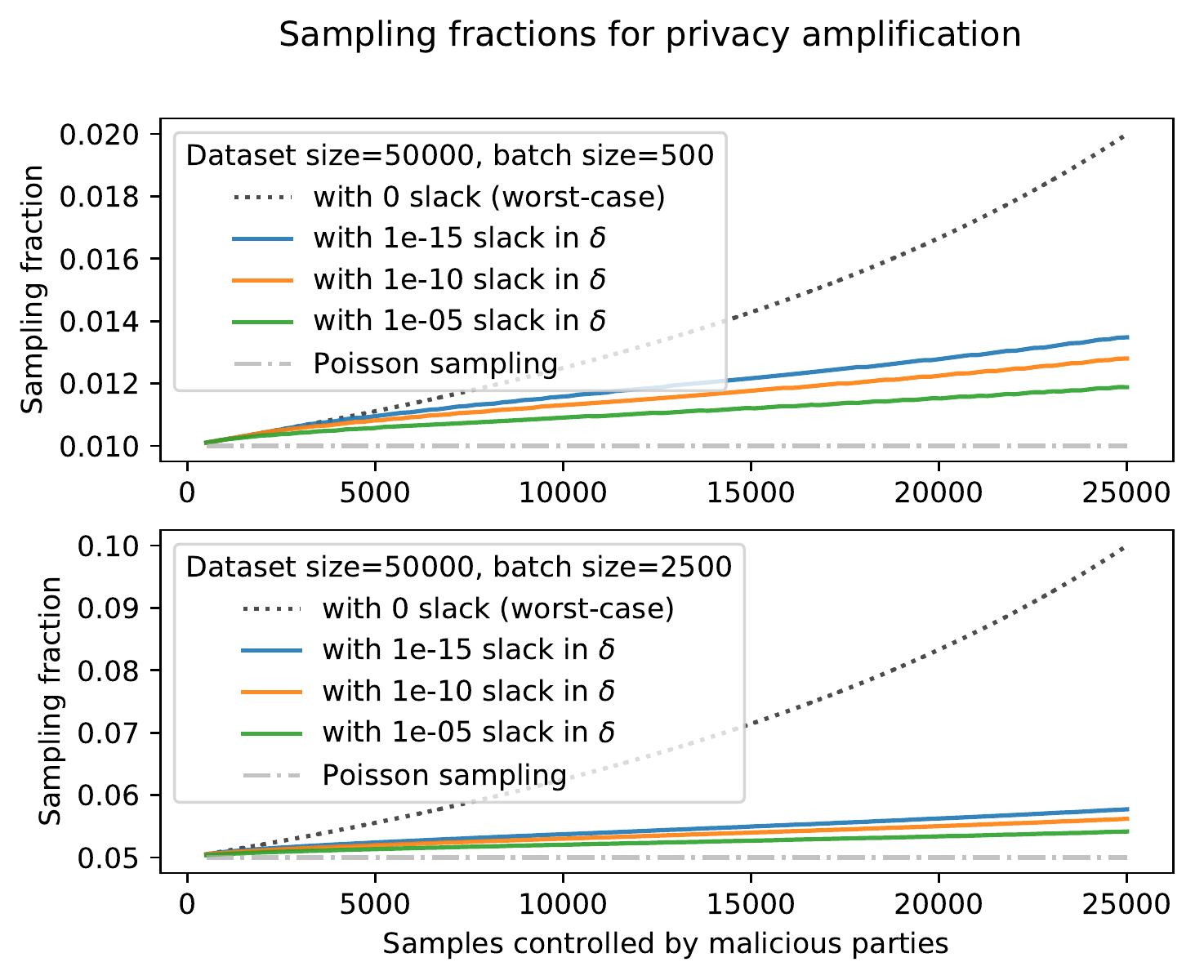}}%
	\caption{
	Effective sampling fraction for privacy amplification (lower is better). 
	With SWOR, allowing even for a tiny slack in $\delta$ improves markedly over the worst-case. The sampling fraction 
	for Poisson sampling is not affected by the malicious parties.
	\label{fig:SWOR_sampling_fracs}}%
	\end{center}%
\end{figure}%


\subsection{DP random projection}
\label{sec:random_projection}

To enable using the methods with even larger models, we propose using a low-dimensional random 
embedding instead of the full gradient in learning. Although the saving is only on uploaded 
gradients, this presents the most significant costs as it requires encryption, whereas downloading the 
updated parameter values can be done in the clear due to DP. In the distributed setting, we do not 
need to communicate the full projection matrix but only the seed for a PRNG.

\begin{algorithm}[H]
\caption{DP random projection}
\label{alg:dp_rand_proj}
\begin{algorithmic}[1]
\REQUIRE Clipped gradient vectors $z_{ij} \in \mathbb R^{d}, i\in \{1,\dots,N\}, j\in \{1,\dots,n_i\}$ with sensitivity $C$,
		projection dimension $k$,  
		projection sensitivity bound $\tilde C$, Gaussian mechanism 
		$\mathcal G_f$ that is $(\epsilon,\delta)$-DP on a sensitivity $\tilde C$ query.
	\STATE Generate a projection matrix $P \in \mathbb R^{ d \times k }$ s.t. each element is 
		independently drawn from $\mathcal N(0,1/k)$.
	\STATE Sum the projected clipped gradients: $ \tilde z = \sum_i \sum_j P^T z_{ij} $.
\STATE Return a DP projection $\mathcal G_f (\tilde z) $.
\end{algorithmic}
\end{algorithm}

We will next show that when a DP mechanism operates on 
the $k$-dimensional representation and the projection matrix is public, 
as described in Algorithm \ref{alg:dp_rand_proj}, the result is still DP. 
The result holds for both neighbouring relations $\sim_R, \sim_S$ we consider.

\begin{theorem}
\label{thm:dp_rand_projection}
Algorithm \ref{alg:dp_rand_proj} is $(\epsilon, \delta + \delta')$-DP, with any $\tilde C > 0$ and $\delta' > 0$ s.t. 
$$ \mathbb P \left[  \Gamma ( K=\frac{k}{2}, \theta=\frac{2C^2}{k} ) \leq \tilde C^2 \right] \geq 1-\delta'  ,$$
where $k$ is the projection dimension, $C$ is the gradient sensitivity, and $\Gamma$ is 
the (shape \& scale parameterised) Gamma distribution.
\end{theorem}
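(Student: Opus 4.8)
The plan is to view Algorithm \ref{alg:dp_rand_proj} as a composition $\mathcal G_f(f(D))$, where $f(D) = \tilde z = P^T \sum_{i}\sum_j z_{ij}$ is the (randomised, through $P$) projection-and-sum step and $\mathcal G_f$ is the Gaussian mechanism that is $(\epsilon,\delta)$-DP on sensitivity $\tilde C$ queries. Given Lemma \ref{lemma:dp_projection}, it suffices to show that $f$ is $(\tilde C, \delta')$-sensitive in the sense of Definition \ref{def:function_sensitivity}; the claimed $(\epsilon, \delta+\delta')$ guarantee then follows by applying that lemma with $\Delta_f = \tilde C$.

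First I would set up the coupling required by Definition \ref{def:function_sensitivity}. For neighbouring $D \sim D'$, draw a single projection matrix $P$ and set $X = P^T S_D$ and $X' = P^T S_{D'}$, where $S_D = \sum_{ij} z_{ij}$ is the clipped gradient sum on $D$. By construction the marginals of $X, X'$ match those of $f(D), f(D')$, and $X - X' = P^T v$ with $v = S_D - S_{D'}$. Because the gradient sum has $\ell_2$-sensitivity $C$, we have $\|v\|_2 \le C$ under either neighbouring relation $\sim_R$ or $\sim_S$ (the two relations differ only in how $C$ is obtained from the per-sample clipping bound), which is exactly why the result holds for both.

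The key step is to identify the exact law of $\|P^T v\|_2^2$. Since each entry of $P$ is an independent $\mathcal N(0,1/k)$, each coordinate $(P^T v)_m = \sum_{l} P_{lm} v_l$ is $\mathcal N(0, \|v\|_2^2/k)$, and the $k$ coordinates are independent because they use disjoint columns of $P$. Hence $\|P^T v\|_2^2 = \tfrac{\|v\|_2^2}{k}\sum_{m=1}^k g_m^2$ with $g_m \sim \mathcal N(0,1)$ i.i.d., so $\sum_m g_m^2 \sim \chi^2_k = \Gamma(K=k/2,\ \theta=2)$. Rescaling by the constant $\|v\|_2^2/k$ multiplies the scale parameter, giving $\|P^T v\|_2^2 \sim \Gamma(K=k/2,\ \theta = 2\|v\|_2^2/k)$. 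I expect this distributional identification — recognising the projected squared norm as a scaled chi-squared and matching it to the shape-and-scale Gamma parameters in the statement — to be the main technical content.

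Finally I would invoke stochastic monotonicity in the scale parameter: since $\|v\|_2 \le C$, the law $\Gamma(k/2,\ 2\|v\|_2^2/k)$ is stochastically dominated by $\Gamma(k/2,\ 2C^2/k)$, so the worst case is $\|v\|_2 = C$ and
$$ \mathbb P\!\left[\|P^T v\|_2^2 \le \tilde C^2\right] \ge \mathbb P\!\left[\Gamma\!\left(K=\tfrac{k}{2},\ \theta=\tfrac{2C^2}{k}\right) \le \tilde C^2\right] \ge 1-\delta', $$
where the last inequality is precisely the hypothesis on $\delta'$. Since $\|P^T v\|_2 \le \tilde C$ iff $\|P^T v\|_2^2 \le \tilde C^2$, this establishes that $f$ is $(\tilde C, \delta')$-sensitive, and Lemma \ref{lemma:dp_projection} then delivers the $(\epsilon, \delta+\delta')$-DP conclusion.
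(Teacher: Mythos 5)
Your proposal is correct and follows essentially the same route as the paper: reduce to showing the projection step is $(\tilde C,\delta')$-sensitive via a coupling with a shared $P$, identify $\|P^T v\|_2^2$ as $\tfrac{\|v\|_2^2}{k}\chi_k^2 = \Gamma(k/2,\,2\|v\|_2^2/k)$, bound it by the $\|v\|_2=C$ case, and conclude with Lemma \ref{lemma:dp_projection}. Your explicit statement of the shared-$P$ coupling and of the stochastic-domination step is, if anything, slightly more careful than the paper's wording.
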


\begin{proof}

See Appendix B for a proof.

\end{proof}

To determine the sensitivity bound $\tilde C$ in Theorem $\ref{thm:dp_rand_projection}$, we decide 
on the additional privacy parameter $\delta' > 0$ and search for the smallest value of $\tilde C$ s.t. 
the condition in Theorem \ref{thm:dp_rand_projection} holds. Then, as stated in the Theorem, 
when the Gaussian mechanism we use is $(\epsilon,\delta)$-DP on a sensitivity $\tilde C$ query, 
the final result is $(\epsilon,\delta+\delta')$-DP.

In practice, each client locally runs step $(1)$ in Algorithm $\ref{alg:dp_rand_proj}$ 
with a shared random seed to get the individual noisy projected gradients and add Gaussian noise 
that will sum up to the desired variance as specified in Section \ref{sec:distributed_noise}. 
The aggregation is then done using the secure 
summation as discussed in Section \ref{sec:secure_sum}. 
After receiving the final decrypted sum $\mathcal G_f (\tilde z)$, 
the master inverts the projection by calculating 
$P \mathcal G_f (\tilde z) = P P^T [ \sum_i \sum_j z_{ij} + \eta_{ij} ] \simeq \sum_i \sum_j z_{ij} + \eta$, 
where $\eta_{ij}$ are the noise terms that sum up to the chosen DP noise as in \eqref{eq:dp_sgd_noisy_sum},  
and takes an optimization step with the resulting $d$-dimensional DP approximate gradient.


\subsection{Use cases}
\label{sec:scenarios}

We focus on two distinct scenarios:
\begin{enumerate}
\item \emph{Fat clients}: $N$ is small, the $n_i$ are moderate, the clients have fair computation and communication capacities
\item \emph{Thin clients}: $N$ is large, $n_i$ are small, the clients have more limited computation and communication capacities
\end{enumerate}

The two main scenarios we consider are motivated by considering $n_i$, the number of 
samples per client, and $N$, the number of clients. 
In order to decrease the effects of the DP noise 
we need more data, i.e., we need to increase either $n_i$ or $N$. As the $n_i$ grow the benefits 
from doing distributed learning start to vanish.

In other words, the most interesting regime for distributed learning 
ranges from each client having a single sample to each having a moderate amount of data.
As for the number of parties, increasing $N$ 
naturally increases the total amount of communication needed 
while also limiting the set of practical secure protocols.

A typical example of fat clients 
is research institutions or enterprises that possess some amount of data 
but would benefit significantly from doing learning on a larger joint dataset.

A stereotypical case of thin clients 
is learning with IoT devices, where the devices we mainly consider would 
be e.g. lower-level IoT nodes, which aggregate data from several simple edge devices.


\section{Experiments}
\label{sec:experiments}

As noted earlier, the code for all the experiments is freely available \cite{mikko_heikkila_2020_3938639}.

We first test our proposed solution for distributed learning using sampling without replacement and 
substitution relation $\sim_S$ 
with CIFAR10 dataset \cite{CIFAR10} and a convolutional neural network (CNN): 
the convolutional part uses 2 convolutional layers (CL) with kernel size 3, stride 1 and 64 channels, 
each followed by max pooling with kernel size 3 and stride 2. After the CL, the model has 2 fully connected (FC) layers with 
384 hidden units each with ReLU and a softmax classification layer. 
The same basic model structure has been previously used e.g. in \cite{Abadi2016}. 
Similarly to \cite{Abadi2016}, we pretrain the CL weights using the CIFAR100 dataset assumed to be public and 
keep them fixed during the training on private data. However, unlike \cite{Abadi2016}, besides having a different 
sampling scheme and neighbourhood relation, we also initialise the FC layer 
weights randomly, and not based on pretraining.

We note that the $\sim 75 \%$ accuracy for the non-private model is not near the state of the art for the CIFAR10 
(e.g. \cite{Huang_2018} reach $\sim 99\%$); 
the model architecture was chosen to illustrate how distributed DP affects the accuracy in a commonly 
used and reasonably well-understood standard model. 
The actual training is done using the well-known DP-SGD \cite{Song_2013, Abadi2016} but 
using the privacy accountant of \cite{Koskela2019}.
For testing encryption speed, we use the MNIST dataset \cite{MNIST} and 2 FC layers with 536 units each. 
This model, denoted model $1m$, has $\simeq 10^6$ trainable parameters.

The tests were run on clusters with  
2.1GHz Xeon Gold 6230 CPUs. All the tests use 1 core to simulate a single party. 
We report the results using our Python implementation of the algorithms. 
Run time results for fat clients can be found in Appendix C.

We use Python's Secrets module 
to generate randomness 
in Algorithm \ref{alg:DCA}, which internally utilises 
urandom calls. 
Generating a single 32-bit random integer with the Secrets module 
takes about $2.15 * 10^{-6}$ seconds.

The values are sent by writing them on a shared disk without additional symmetric encryption. 
In real implementations all communications would be done over the available communication 
bandwidth using some standard symmetric encryption such as AES. 
We do not include this nor the possible TEE component in the testing 
but focus on the feasibility of learning based on Algorithms \ref{alg:DCA} 
and \ref{alg:dp_rand_proj}.

The baselines for accuracy are given by the centralised trusted aggregator setting, 
as well as local DP (LDP) \cite{Duchi_2013}, which means that each participant locally adds enough noise 
to protect its contributions without any encryption scheme. As shown in Figure \ref{fig:acc_fat}, there 
is a tradeoff between privacy and accuracy. 
However, combining the secure summation protocols with DP, as we propose, provides significantly 
better accuracy than is possible by each party independently protecting their privacy.

\begin{figure}[htb]
\begin{center}
	\centerline{\includegraphics[width=.8\columnwidth]{\figfolder/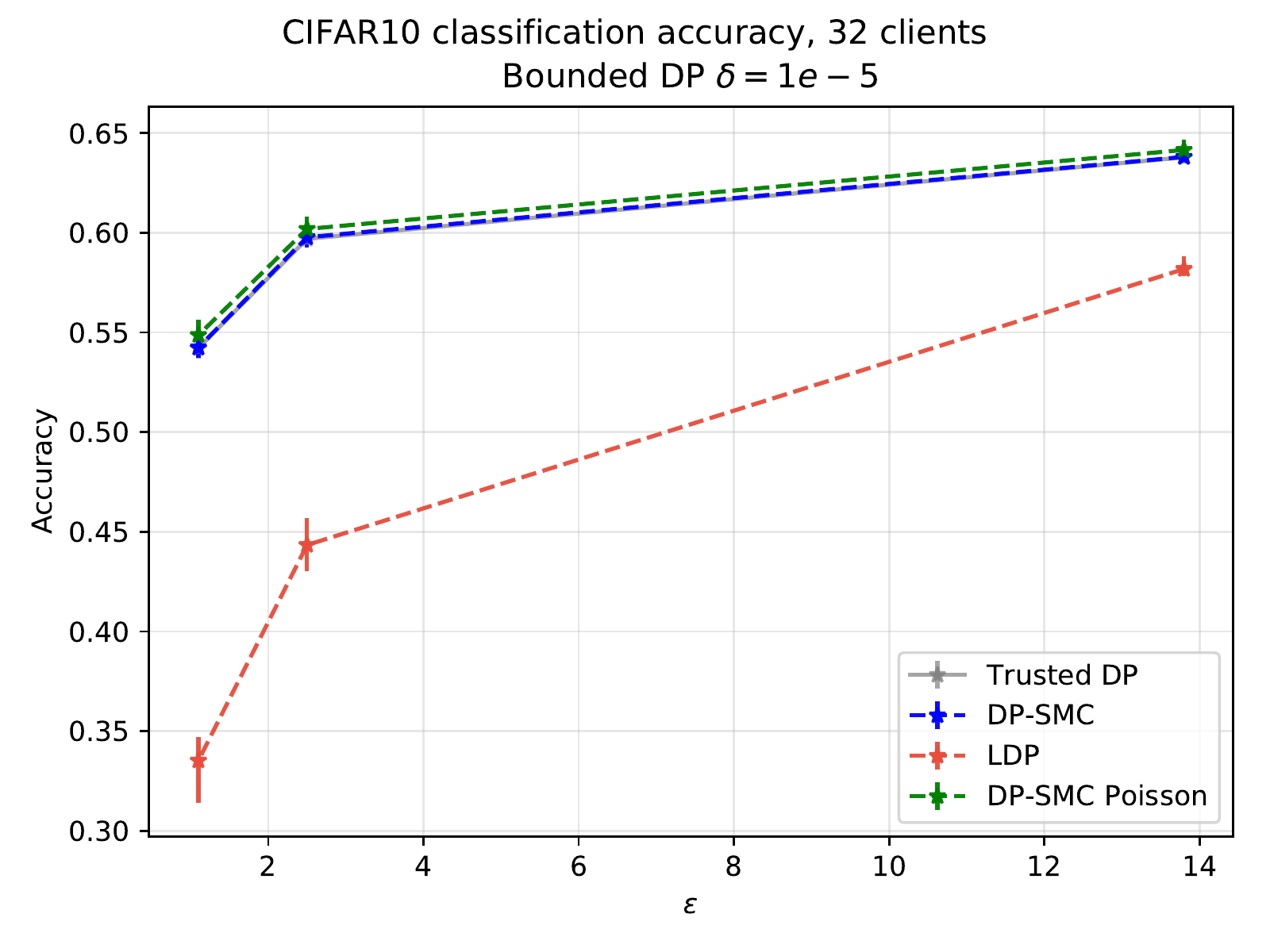}}
	\caption{Mean test accuracy and error bars showing min \& max over 5 runs with 
	fat clients on CIFAR10 data, $\delta=10^{-5}$. Our method (DP-SMC, DP-SMC Poisson) 
	gives significantly better results than  LDP especially in the stricter privacy regime $\epsilon < 2$. 
	The performance of DP-SMC using either sampling without replacement (DP-SMC) or Poisson sampling 
	(DP-SMC Poisson) is indistinguishable from the centralised scheme 
	with a trusted aggregator that uses either sampling without replacement (trusted DP) 
	or Poisson sampling (not shown).
	\label{fig:acc_fat}}
	\end{center}
\end{figure}

\begin{figure}[htb]%
\begin{center}%
	\centerline{\includegraphics[width=.8\columnwidth]{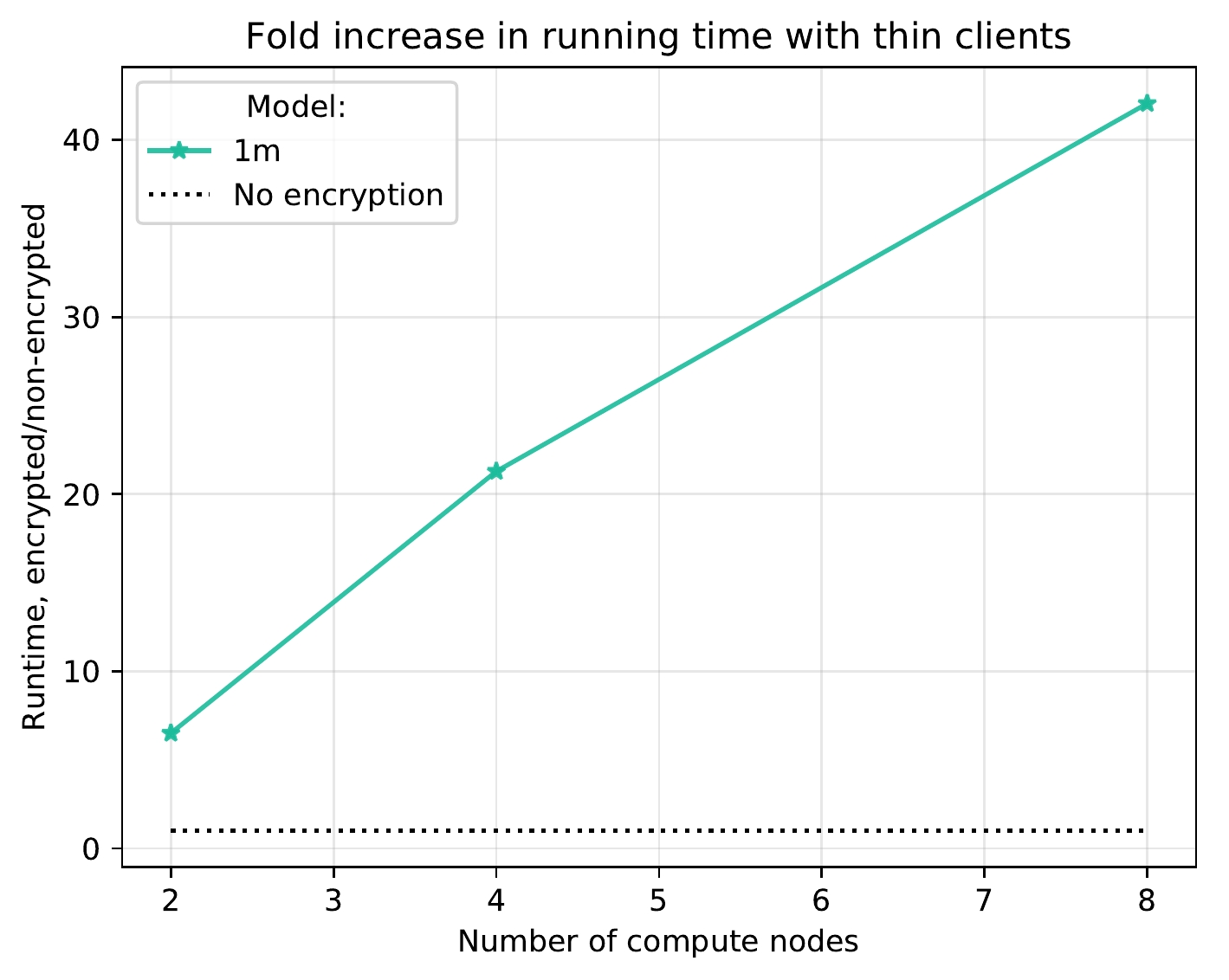}}%
	\caption{Fold increase in total running time with varying number of compute nodes for 100 clients, 
	medians over 5 runs. 
	The number of compute nodes can be tuned to adjust the tradeoff between security 
	and running time. 
	Using MNIST data, model $1m$ has 2 FC hidden layers with $\simeq 1e6$ parameters. 
	\label{fig:fold_increase_thin}}%
	\end{center}%
\end{figure}%

The tradeoff between running times and the security parameters is shown in 
Figure \ref{fig:fold_increase_thin}. 
The results show that our solutions can be realistically 
used for learning complex models with distributed data.

We can use Algorithm \ref{alg:dp_rand_proj} to reduce the computational burden due to 
cryptography and the amount of communication needed at the 
cost of having to generate projection matrices and losing some prediction accuracy. 
As shown in Figure \ref{fig:projection_acc}, even with very large savings on communication 
and decreased runtime, we still incur only modest decrease in accuracy.

\begin{figure}[htb]%
\begin{center}%
	\centerline{\includegraphics[width=.8\columnwidth]{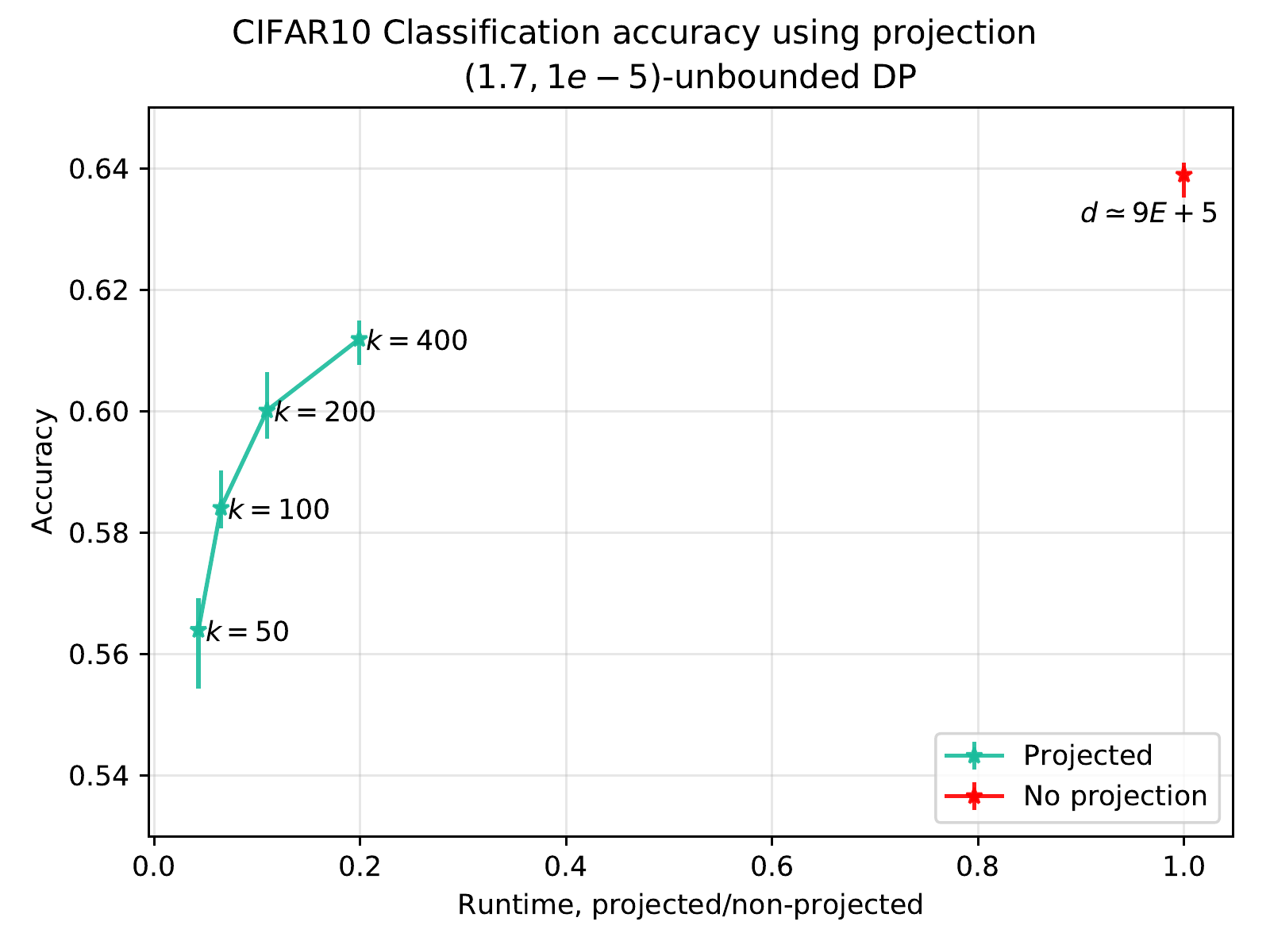}}%
	\caption{Mean test accuracy and error bars showing min \& max over 5 runs 
	using only low-dimensional embeddings vs. fold decrease in total running time, 
	Poisson sampling $(1.7,10^{-5})$-DP with 8 compute nodes. 
	The full model has $d \simeq 9\cdot 10^5$ parameters. 
	Accuracy is close to the original even with very low projection dimensionality $k$. 
	Low dimensionality has marked advance in total runtime and requires communicating 
	several orders of magnitude less encrypted values.
	\label{fig:projection_acc}}%
	\end{center}%
\end{figure}%


\section{Related work}
\label{sec:related_workj}

There is a wealth of papers on distributed DP with secure summation as witnessed by 
a recent survey \cite{Goryczka_2017}, many of which are relevant also for DP federated learning. 
The problem of distributed DP was first formulated in \cite{dwork_distributed_2006}, who 
propose to generate DP noise collaboratively using a SMC protocol. The idea of using a general SMC 
for distributed DP has been more recently considered e.g. in \cite{Jayaraman_2018}. The main 
drawback with these methods is the computational burden.

The first practical method for implementing DP queries in a distributed manner was the distributed
Laplace mechanism presented in \cite{Rastogi_2010}. Closely related methods have been 
commonly used in the literature ever since \cite{Shi_2011, Acs_2011, Goryczka_2017, Heikkila_2017, imtiaz2019distributed}, 
including in this work.

There has been much new work on distributed learning in the 
FL setting \cite{McMahan_2017}, as evidenced by a recent survey \cite{Kairouz_2019}.
Most of the work has concentrated on the cross device case with 
typically millions of clients each holding at most a few samples. 
Meanwhile, the cross-silo setting has seen significantly fewer contributions, especially 
focusing on privacy.

Of the closest existing work, 
in \cite{choudhury2020anonymizing} the authors consider privacy in cross-silo FL setting 
without DP, but their approach lacks any formal privacy guarantees. 
In \cite{Wei_2020} the focus is on deriving results for training models on the clients' 
local data under DP, where the privacy guarantees are derived directly on the communicated 
parameters instead of gradients. However, all the reported experimental results 
use privacy budgets that are completely vacuous in practice ($\epsilon \geq 50$). 
\cite{Truex_2019hybrid} concentrates on training several ML models in the cross-silo FL setting with DP. 
Their solution uses essentially the same distributed noise generation but 
slower encryption, does not use TEEs nor address sampling schemes other than the distributed 
Poisson sampling. They also do not consider communicating low-dimensional embeddings.

TEEs have recently been gaining popularity in machine learning literature. 
Privacy-preserving machine learning using TEEs was proposed in \cite{Ohrimenko_2016}, 
who discuss security of enclaves especially against data access pattern leaks. 
\cite{costa2017pyramid} concentrate on memory access leaks in SGX, whereas 
\cite{Allen_2018} propose a new DP definition to account for the possible privacy breaches resulting 
from data and memory access leaks. \cite{Hynes_2018} focus on how to implement efficient operations 
needed for a ML framework built on TEEs and DP. 
Several recent papers have investigated efficient execution of DNNs and other models in 
TEEs~\cite{tramer2018slalom, hunt2018chiron, gu2018yerbabuena, kunkel2019tensorscone}.


\section{Conclusion}

We have presented two methods for cross-silo FL with strict privacy-guarantees 
based on fast homomorphic encryption and distributed noise addition, and 
shown empirically that they can be used for learning complex models like 
neural networks with reasonable running times. Finally, 
we have shown how the scale the methods to even larger models by 
using random projections while incurring only modest cost in terms 
of prediction accuracy.



\subsubsection*{Acknowledgements}

The authors wish to acknowledge CSC -- IT Center for Science, 
Finland, for computational resources.

\bibliography{DP-cross-silo-FL-arXiv}


\section*{Appendix A}

This Appendix contains the secure summation protocol and related discussion for fat clients.

\subsection*{Secure summation with fat clients}

For secure summation with fat clients, 
we use an additively homomorphic encryption scheme given 
in Algorithm~\ref{alg:Castelluccia_sum}. 

The algorithm was originally introduced in 
\cite{Castelluccia_2005}, uses fixed-point representation of real numbers, and is 
based on fast modulo-addition.

\begin{algorithm}[H]
\caption{Secure summation for fat clients }
\label{alg:Castelluccia_sum}
\begin{algorithmic}[1]
\REQUIRE
   Upper bound for the total sum $R$ (public); \\
   $y_i$ integer held by party $i, i=1, \dots, N$; \\
   Pairwise secret keys $k_{i j}$ held by party $i, \ i,j=1,\dots, N, i \neq j$ s.t. $k_{i j} + k_{j i} = 0 \mod R$ .
   \ENSURE Securely calculated sum $\sum_{i=1}^N y_i $.
   \STATE Each client $i$ calculates $Enc(y_{i}, k_{i}, R ) = y_{i } + k_{i} =  y_{i }+ \sum_{j\neq i} k_{ij} \mod R,$ 
   and sends the result to the aggregator.
   \STATE After receiving messages from all other parties, 
   the aggregator broadcasts the sum $\sum_{i=1}^N \sum_{j\neq i} y_{i } + k_{i j} \mod R = \sum_{i=1}^N y_i$.
\end{algorithmic}
\end{algorithm}

Here we assume the parties are TEEs but this is not essential for the protocol. 
We assume one of the TEEs acts as an untrusted aggregator, who does the actual summations 
and broadcasts the results to the other TEEs but otherwise has no special information or capabilities. 
Without implementing e.g. some zero-knowledge proof of validity, the aggregator might change 
the final result at will if the TEE is compromised. However, this can usually 
be detected post hoc by each TEE by comparing 
the jointly trained model to the model trained on their private data. The role of the master could also 
be randomised or duplicated if deemed necessary.

For the secure DP summation, the TEEs encrypt $y_i = z_{i} + \eta_{i}$ where $\eta_i$ is DP 
noise as in the main text's Section \ref{sec:distributed_noise}\footnote{
Note that the actual sum is unbounded with a Gaussian noise term for DP. However, 
we can always clip the values to the assumed range since post-processing does not affect the DP guarantees.
}.

To generate the secret keys $k_{ij},$ for each round a separate setup phase is needed, which can be done with $O(N
^2)$ messages e.g. using standard Diffie-Hellman key-exchange \cite{Diffie_1976} or existing public key cryptography. 
The actual values for each iteration can be drawn from the discrete uniform distribution bounded by $R$ 
using a cryptographically secure pseudorandom 
generator (CSPRNG) \cite{Goryczka_2017} initialised with a seed shared by each pair of TEEs. Since each individual's value 
is protected by all pairwise keys, the method is secure as long as there are at most $N-2$ compromised 
TEEs. We refer to \cite{Castelluccia_2005} and \cite{Acs_2011} 
for more details.


\section*{Appendix B}

This Appendix contains all the proofs omitted from the main text. For convenience, we 
first state the theorems again and then proceed with the proofs.

\subsection*{Proof of Theorem 1}

\setcounter{theorem}{0}

\begin{theorem}
Assume the public-key encryption used in Algorithm \ref{alg:list_of_tokens} is secure. 
Then 
i) the list cannot be manipulated in non-trivial ways,  
ii) to hbc parties all tokens belonging to other parties are indistinguishable from each other and 
to malicious parties all tokens belonging to hbc parties are indistinguishable from each other.
\end{theorem}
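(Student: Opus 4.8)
The plan is to treat Algorithm~\ref{alg:list_of_tokens} as a Chaum-style decryption mixnet with $N$ layers, where party $i$ holds the private key corresponding to the public key $k_i$ used for layer $i$, peels that layer at its turn, and applies a secret uniform permutation $\pi_i$. Both claims reduce to two facts: (a) the semantic security (IND-CPA) of the public-key scheme, which guarantees that a party peeling layer $i$ learns nothing about the plaintext tokens still wrapped under the inner layers $i+1,\dots,N$; and (b) that an honest party's permutation is uniform and, since hbc parties do not collude, unknown to everyone else. First I would isolate these as a single lemma --- ``a party that has peeled layer $i$ cannot, except with negligible advantage, associate any still-encrypted token with its value or its owner'' --- and then invoke it in both parts.

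For claim (i), the observation is that at party $i$'s turn the outermost remaining layer is $k_i$; after peeling it, party $i$ holds ciphertexts under $k_{i+1},\dots,k_N$ that it cannot decrypt. Any deviation from honest behaviour is therefore \emph{blind}: the party may drop, duplicate, re-encrypt, or inject ciphertexts, but by the lemma its choice of which message to touch is statistically independent of the underlying token and its owner. I would classify the achievable deviations and argue each is either \emph{trivial} --- replacing the contents of one's own tokens, which a party may always do by choosing different inputs, or aborting --- or else detectable, since the total number of tokens $n=\sum_i n_i$ and their uniqueness are public and any insertion, deletion, or duplication changes the multiset in a checkable way. Ruling out a \emph{targeted} manipulation (forcing a specific honest party's token to a chosen value) is the crux: I would show by reduction that a manipulation whose effect on the output depends on a hidden token value yields an IND-CPA distinguisher, contradicting (a).

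For claim (ii), I would condition on the adversary's complete view --- all published lists together with the keys and permutations it controls --- and establish an exchangeability statement about the induced map from final tokens to owners. For an hbc observer $p$: since the other honest parties' permutations are secret and their encryption layers unreadable, composing the true owner-to-output assignment with even one unknown uniform permutation yields, conditioned on $p$'s view, a uniform assignment over the other parties' tokens, so they are mutually indistinguishable. For a malicious coalition I would run the same argument with the symmetry group restricted to the permutations contributed by the hbc parties: a hybrid argument first uses (a) to replace the inner ciphertexts of the hbc-owned tokens by encryptions of arbitrary values, and then uses the secret uniform permutation of at least one hbc mixer to make the hbc-owned tokens exchangeable from the coalition's standpoint --- the coalition can peel and track its own tokens and layers, but not the honest ones.

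The main obstacle I anticipate is the malicious case of (ii): formalising that a coalition which controls every mixer but one, and which also chose its own input tokens, still cannot distinguish the hbc parties' tokens. This needs a careful hybrid that cleanly separates the IND-CPA step (hiding the hbc token contents) from the combinatorial step (the single remaining secret permutation forcing uniformity), while accounting for the adversary's ability to correlate its own inputs across successive lists. Because claim~(i)'s ``cannot target'' property itself relies on the indistinguishability in claim~(ii), I would prove the semantic-security lemma (a) first and use it as the common foundation for both parts.
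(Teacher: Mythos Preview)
Your decomposition matches the paper's: both parts rest on semantic security of the layered encryption together with the secrecy of at least one honest party's permutation. For claim~(ii) your hybrid/exchangeability argument is a more careful rendering of exactly what the paper does informally --- the paper simply fixes two hbc indices $i<j$ and asserts that after either one's iteration the published list ``looks like random numbers with a random permutation'' to everyone else, and your IND-CPA step followed by the single-honest-permutation step makes that precise. So for (ii) you are on the same route, just more rigorous.

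For claim~(i), however, the paper relies on a detection mechanism you do not invoke, and without it your argument has a gap. Your detectability case appeals to the public count $n=\sum_i n_i$ and token uniqueness, which catches insertion, deletion, and duplication, but not a \emph{substitution} that preserves the multiset size: a malicious mixer at its turn can blindly replace one ciphertext by a fresh encryption of an arbitrary value, and neither the count nor uniqueness flags this. Your IND-CPA reduction only shows such a replacement cannot be \emph{targeted} at a specific honest owner; it does not show that a blind substitution is detected or harmless, and a blind substitution that happens to hit an honest token is certainly a non-trivial manipulation of the list. The paper closes this gap with a different observation: since each party $j$ chose all the randomness when forming its onions $Enc_{k_1}(\cdots Enc_{k_N}(r_{j\ell})\cdots)$, it can recompute exactly what each of its ciphertexts should look like after every peeling step, and therefore can verify after every published intermediate list that its own tokens are still present and that the list size is unchanged. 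Any tampering with an honest party's token --- targeted or blind --- is thus detected by that owner, so the only undetected manipulations are to the malicious parties' own tokens, which are trivial. Add this per-round owner-tracking check to your classification of deviations; once it is in place, the IND-CPA ``cannot target'' reduction is no longer needed for~(i).
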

\begin{proof}

To begin with, since the encryption is secure, an attacker cannot deduce the true ciphertext at layer $i$ 
by seeing the unencrypted plaintext at layer $i+1, i=1,\dots, N-1$.

For i), since after each iteration \eqref{alg3:loop_iter} in the for-loop any party can check if its tokens are still included on the list and that the list size has not changed, 
the only manipulation that can be done without alerting some hbc party is to tamper with the tokens of the malicious 
parties without changing their total number. Since these parties can in any case decide their contribution to learning without 
regarding the tokens, manipulating these tokens cannot gain anything.

As for ii), w.l.o.g. assume hbc parties at iterations $i$ and $j$, $i < j$ in the for-loop \eqref{alg3:for_loop}. 
For $j$, since $i$ follows the protocol the resulting list after iteration $i$ of \eqref{alg3:loop_iter} looks like random numbers with a random permutation, 
so all elements not belonging to $j$ are indistinguishable from each other. 
For $i$, even if $i$ can identify all elements on the list after iteration $i$ of the for-loop, 
after iteration $j$ the list again looks like a random permutation of random numbers since $j$ follows the protocol. Furthermore,
after iteration $i$ the entries belonging to the hbc parties look like random numbers with a random permutation 
to any malicious parties and as such are indistinguishable from each other to any malicious party from this iteration onwards.

\end{proof}

\subsection*{Proof of Theorem 2}

\begin{theorem}
Algorithm \ref{alg:dp_rand_proj} is $(\epsilon, \delta + \delta')$-DP, with any $\tilde C > 0$ and $\delta' > 0$ s.t. 
$$ \mathbb P \left[  \Gamma ( K=\frac{k}{2}, \theta=\frac{2C^2}{k} ) \leq \tilde C^2 \right] \geq 1-\delta'  ,$$
where $k$ is the projection dimension, $C$ is the gradient sensitivity, and $\Gamma$ is 
the (shape \& scale parameterised) Gamma distribution.
\end{theorem}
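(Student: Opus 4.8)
The plan is to invoke Lemma~\ref{lemma:dp_projection} (the projection lemma of \cite{Agarwal_2018}): if the Gaussian mechanism $\mathcal G_f$ is $(\epsilon,\delta)$-DP on sensitivity-$\tilde C$ queries, and if the projected summation map $f(D) = \sum_i \sum_j P^T z_{ij}$ is $(\tilde C, \delta')$-sensitive in the sense of Definition~\ref{def:function_sensitivity}, then the composed mechanism $\mathcal G_f(f(D))$ is $(\epsilon, \delta+\delta')$-DP. So the whole task reduces to showing that the random projection of the summed clipped gradients is $(\tilde C, \delta')$-sensitive under the stated Gamma-tail condition, and this must be done for both neighbouring relations $\sim_R$ and $\sim_S$.

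First I would reduce the sensitivity computation to a single gradient vector. For $\sim_R$, neighbouring datasets differ by adding or removing one sample, so $f(D) - f(D') = \pm P^T z$ for a single clipped gradient $z$ with $\|z\|_2 \le C$; for $\sim_S$, one sample is replaced, giving $f(D)-f(D') = P^T(z - z')$, and since $z$ and $z'$ are coupled we can still bound the relevant displacement by a vector of norm at most $C$ after the appropriate coupling (taking the worst case over a single perturbation of norm $\le C$). Either way the quantity to control is $\|P^T v\|_2$ for a fixed vector $v$ with $\|v\|_2 \le C$, where $P \in \mathbb R^{d\times k}$ has i.i.d.\ $\mathcal N(0,1/k)$ entries. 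The coupled random variables $X, X'$ required by Definition~\ref{def:function_sensitivity} are simply the two projected sums sharing the \emph{same} realization of $P$, so that $X - X'$ is exactly $\pm P^T v$.

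The core step is then a distributional identity. Fix $v$ with $\|v\|_2 = C$ (the worst case). Each coordinate $(P^T v)_\ell = \sum_m P_{m\ell} v_m$ is a linear combination of independent $\mathcal N(0,1/k)$ variables, hence $(P^T v)_\ell \sim \mathcal N\!\left(0, \|v\|_2^2/k\right) = \mathcal N(0, C^2/k)$, and the $k$ coordinates are independent. Therefore
$$
\|P^T v\|_2^2 = \sum_{\ell=1}^{k} (P^T v)_\ell^2 \sim \frac{C^2}{k}\,\chi^2_k .
$$
Rewriting the scaled chi-squared as a Gamma distribution, $\frac{C^2}{k}\chi^2_k = \Gamma\!\left(K = \tfrac{k}{2},\, \theta = \tfrac{2C^2}{k}\right)$, since $\chi^2_k = \Gamma(k/2, 2)$ and scaling a Gamma by $C^2/k$ multiplies its scale parameter. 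The hypothesis $\mathbb P[\,\Gamma(k/2, 2C^2/k) \le \tilde C^2\,] \ge 1-\delta'$ then says precisely that $\mathbb P[\|P^T v\|_2 \le \tilde C] \ge 1-\delta'$, which is the $(\tilde C, \delta')$-sensitivity condition. Applying Lemma~\ref{lemma:dp_projection} delivers the claimed $(\epsilon, \delta+\delta')$-DP guarantee.

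The main obstacle I anticipate is the $\sim_S$ (substitution) case: there the difference is $P^T(z-z')$, and one must argue carefully that the coupling in Definition~\ref{def:function_sensitivity} can be chosen so that the tail bound reduces to a single norm-$\le C$ perturbation rather than accumulating sensitivity from both $z$ and $z'$. The clean way is to note that clipping forces $\|z\|_2, \|z'\|_2 \le C$, and to set up the coupling so that the effective displacement whose projection we track has norm bounded by $C$ (the sensitivity $C$ is defined to already encode the worst-case single-sample change under the relevant relation), so the same Gamma tail controls both relations. Beyond this coupling bookkeeping, the remaining steps — the Gaussian linear-combination identity and the $\chi^2$-to-Gamma reparameterization — are routine.
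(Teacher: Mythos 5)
Your proposal is correct and follows essentially the same route as the paper: reduce to $(\tilde C,\delta')$-sensitivity of the projection map via the shared-$P$ coupling, identify $\|P^T v\|_2^2$ with $\tfrac{C^2}{k}\chi^2_k = \Gamma(k/2,\, 2C^2/k)$, and apply Lemma~\ref{lemma:dp_projection}. The $\sim_S$ subtlety you flag is resolved in the paper exactly as you anticipate, by clipping to $C/2$ so that the single-substitution displacement $z-z'$ has norm at most $C$.
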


\begin{proof}

Let $f_{JL}(a) = P^T a $ be the projection with each element in $P \in \mathbb R^{d \times k }$ 
independently drawn from $\mathcal N(0,1/k)$. 
We need to show that $f_{JL}$ is 
$(\tilde C, \delta')$-sensitive (Definition \ref{def:function_sensitivity}), and 
the result then follows directly from Lemma \ref{lemma:dp_projection}.

Consider first the neighbouring relation $\sim_R$. The sums of clipped gradients 
originating from maximally different data sets $D~\sim_R~D'$ differ in one vector, w.l.o.g. 
denoted as $a \in \mathbb R^d, \| a \|_2 \leq C$. 
As the coupling required by Definition \ref{def:function_sensitivity}, we use 
a trivial independent coupling, so we have $X - X' = f_{JL}( a)= P^T a$.

Writing $\mathcal N, \chi^2$ for random variables following normal and chi-squared distributions, 
respectively, we therefore have 
\begin{align}
\| X - X' \|^2_2 &=  \| P^T a  \|^2_2 \\
&=  \sum_{j=1}^k [\sum_{i=1}^d  a_i \mathcal N(0,1/k)  ]^2 \\
&= \frac{ \| a \|_2^2 }{k}  \chi_k^2 \\
&\leq \frac{ C^2 }{k}  \chi_k^2 .
\end{align}

Looking at Definition \ref{def:function_sensitivity}, we see that $f_{JL}$ is 
$(\tilde C, \delta')$-sensitive, when 
\begin{equation}
\label{eq:proj:condition}
\mathbb P \left[  \Gamma ( K=\frac{k}{2}, \theta=\frac{2C^2}{k} ) \leq \tilde C^2 \right] \geq 1-\delta'  .
\end{equation}

The fact that Algorithm \ref{alg:dp_rand_proj} is $(\epsilon, \delta+\delta')$-DP 
then follows from Lemma \ref{lemma:dp_projection}.

When using the neighbouring relation $\sim_S$ and constant $C/2$ for clipping the 
gradients, the gradients originating from the maximally different data 
sets $D~\sim_S~D'$ again differ at most in a vector $a$ s.t. $\| a \|_2 \leq C$. 
The privacy analysis of the mechanism $\mathcal G_f $ is then carried out using 
the techniques given in \cite{Koskela2019} 
and we get the $(\epsilon,\delta)$-bound for $\mathcal G_f$ using the projection 
sensitivity bound $\tilde C$, so again Algorithm \ref{alg:dp_rand_proj} is 
$(\epsilon, \delta+\delta')$-DP for $\delta'$ that satisfies \ref{eq:proj:condition}.

\end{proof}


\section*{Appendix C}

This Appendix contains the experimental results for fat clients, i.e., 
using Algorithm \ref{alg:Castelluccia_sum} for encryption, omitted 
from the main text.

\subsection*{Implementation details and more experimental results}

To generate randomness with a CSPRNG required in Algorithm \ref{alg:Castelluccia_sum}, we use Blake2 \cite{Blake2}, 
a fast cryptographic hash function, on 
an initial pairwise shared secret together with a running number.
On 2.1GHz Xeon Gold 6230, generating one 4 byte hash digest with Blake2 takes 
about $1.2 * 10^{-6}$ seconds.

The tradeoff between running times and the security parameters corresponding to 
the main text's Figure \ref{fig:fold_increase_thin} is shown in 
Figure \ref{fig:fold_increase_fat}.

\begin{figure}[htb]%
\begin{center}%
	\centerline{\includegraphics[width=.8\columnwidth]{\figfolder/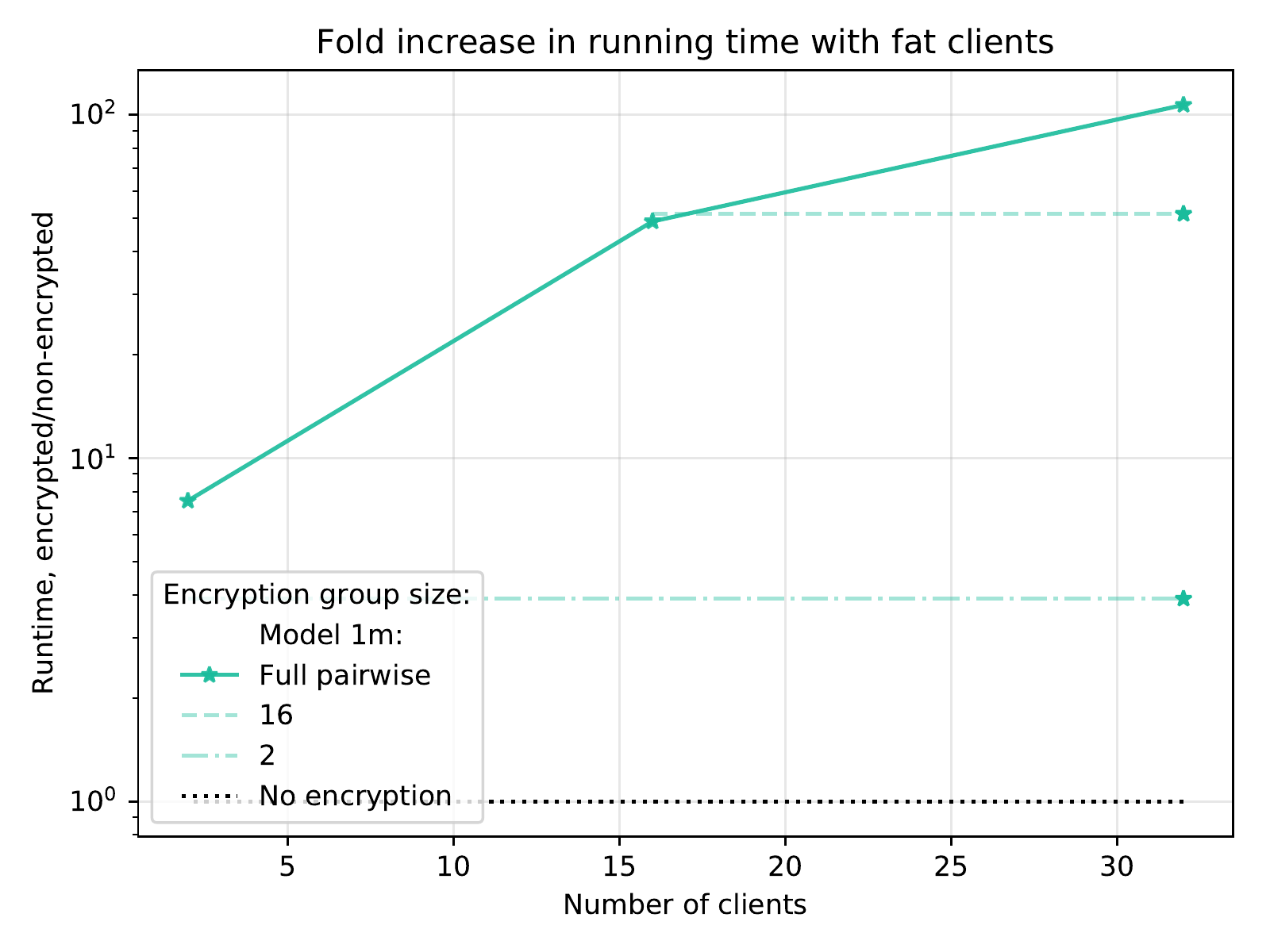}}%
	\caption{Fold increase in total running time with varying number of clients, medians over 5 runs. 
	Encryption time depends on the chosen pairwise group size: even with increasing number 
	of clients the encryption time stays roughly constant with a fixed group size (horizontal lines). 
	Using MNIST data, model $1m$ has 2 FC hidden layers with $\simeq 1e6$ parameters. 
	\label{fig:fold_increase_fat}}%
	\end{center}%
\end{figure}%


\end{document}